\documentclass[11pt]{article} 
\usepackage[margin=2.3cm]{geometry}
\usepackage{amsthm}
\usepackage{amsmath}

\usepackage{graphicx}
\usepackage{subcaption}
\usepackage{authblk}

\newtheorem{theorem}{Theorem}[section]
\newtheorem{lemma}[theorem]{Lemma}

\newtheorem{corollary}[theorem]{Corollary}

\newenvironment{definition}[1][Definition]{\begin{trivlist}
\item[\hskip \labelsep {\bfseries #1}]}{\end{trivlist}}
\newenvironment{example}[1][Example]{\begin{trivlist}
\item[\hskip \labelsep {\bfseries #1}]}{\end{trivlist}}
\newenvironment{remark}[1][Remark]{\begin{trivlist}
\item[\hskip \labelsep {\bfseries #1}]}{\end{trivlist}}

\usepackage[mathscr]{eucal}

\usepackage{mathtools}

\usepackage{amsfonts}

\providecommand{\keywords}[1]
{
  \small    
  \textbf{{Keywords: }} #1
}

\title{Derivatives pricing using signature payoffs\footnote{Opinions expressed in this paper are those of the authors, and do not necessarily reflect the view of JP Morgan.}}
\author[1, 2]{Imanol Perez Arribas}
\affil[1]{J.P. Morgan, London}
\affil[2]{Mathematical Institute, University of Oxford}


\linespread{2}

\begin{document}
\maketitle

\begin{abstract}
We introduce signature payoffs, a family of path-dependent derivatives that are given in terms of the signature of the price path of the underlying asset. We show that these derivatives are dense in the space of continuous payoffs, a result that is exploited to quickly price arbitrary continuous payoffs. This approach to pricing derivatives is then tested with European options, American options, Asian options, lookback options and variance swaps. As we show, signature payoffs can be used to price these derivatives with very high accuracy.
\end{abstract}

\keywords{derivatives pricing, rough path theory, signatures, signature payoffs}

\section{Introduction}

The idea of representing functions as linear functionals on some basis is not new. For instance, a real-valued function in $\mathbb{R}^d$ is continuous if and only if it can be locally well approximated by polynomials (\cite{weierstrass}). Hence, we could study such continuous functions by analysing their linear effects on some polynomial basis on $\mathbb{R}^d$.

The signature of a path is a mapping that, in a way, plays an analogous role to a polynomial basis but on path space. A path is a continuous map defined on an interval $[0, T]$ and taking values on a $d$-dimensional Euclidean space $\mathbb{R}^d$.  Its signature is a sequence of iterated integrals \cite{lyonsbook, learningpast} (a precise definition is made in Definition \ref{def:signature}). It turns out that continuous functions on signatures can be locally well approximated by linear functions on signatures.

This offers a wide range of applications in various fields, including finance. Suppose we have some (possibly high-dimensional) price paths, and we want to study the effects of an unknown real-valued continuous function on price paths. If all we know about the function is its image on a finite set of such price paths, we could infer more knowledge about it by taking a look to its linear effects on the signature of the price paths. We can then use this linear representation of the unkown function in terms of the signature to make predictions about the value of the function on unseen price paths.

The objective of this paper is to leverage this linear representation of functions on paths to quickly price arbitrary financial derivatives. We do so by introducing in Section \ref{sec:signature payoffs} a family of derivatives called \textit{signature payoffs}. Similarly to polynomials, these signature payoffs form an algebra: the sum of two signature payoff is a signature payoff; the product of two signature payoffs is a signature payoff. Moreover, signature payoffs are fast to price and can approximate continuous payoffs -- i.e. continuous real-valued functions defined on price paths -- arbitrarily well.

In Section \ref{sec:experiments} we applied the methodology to approximate prices of European options, American options, Asian options, lookback options and variance swaps using signature payoffs, and the signature approach proved to be very accurate at pricing them.


\section{Notation}

Given a $d$-dimensional Euclidean space $\mathbb{R}^d$, we define the tensor algebra $$T((\mathbb{R}^d)) := \{ (a_i)_{i\geq 0} : a_i \in (\mathbb{R}^d)^{\otimes i}\}.$$ As its name suggests, $T((\mathbb{R}^d))$ is an algebra with the sum $+$ and tensor product $\otimes$. Similarly, define $$T(\mathbb{R}^d) := \{ (a_i)_{i\geq 0} : a_i \in (\mathbb{R}^d)^{\otimes i}\mbox{ and }\exists N\geq 0\mbox{ such that }a_i=0 \mbox{ } \forall i\geq N\},$$ which is a subalgebra of $T((\mathbb{R}^d))$. We will also consider the truncated tensor algebra of order $n\in \mathbb{N}$, defined as $T^{n}(\mathbb{R}^d) := \{(a_i)_{i=0}^n : a_i \in (\mathbb{R}^d)^{\otimes i}\}$.

Notice that there exists a natural inclusion $T((\mathbb{R}^d)^\ast) \rightarrow T((\mathbb{R}^d))^\ast$ (\cite{lyonsbook}), where $(\mathbb{R}^d)^\ast$ and $T((\mathbb{R}^d))^\ast$ denote the dual space of $\mathbb{R}^d$ and $T((\mathbb{R}^d))$ respectively, i.e. the space of all linear functionals.

\section{The signature of a path}\label{sec:signature}

We shall begin by defining the space of $d$-dimensional paths with bounded $p$-variation.

\begin{definition}
Let $p\geq 1$, and let $X\in C([0, T]; \mathbb{R}^d)$. The $p$-variation of $X$ is defined as

$$\lVert X \rVert_p := \left (\sup_{\{t_i\}_i\subset [0, T]} \sum_i |X_{t_{i+1}} - X_{t_i}|^p \right )^{1/p},$$ where the suppremum is taken over all partitions of $[0, T]$. The space of continuous paths of bounded $p$-variation is then defined as

$$\mathcal{V}^p([0, T]; \mathbb{R}^d) := \{X\in C([0, T]; \mathbb{R}^d) : \lVert X \rVert_p < \infty\},$$ which is a Banach space with the norm $\lVert X \rVert_{\mathcal{V}^p([0, T]; \mathbb{R}^d)} := \lVert X \rVert_p + \max_{0\leq t \leq T} |X_t|$.
\end{definition}

We may now define the signature of a $d$-dimensional path $X$, which is an element of the tensor algebra $T((\mathbb{R}^d))$.

\begin{definition}\label{def:signature}
Let $p\geq 1$. The signature of a path $X\in \mathcal{V}^p([0, T]; \mathbb{R}^d)$ is defined as

$$S(X) := (1, X^1, X^2, \ldots) \in T((\mathbb{R}^d))$$ where

$$X^n := \underset{0<u_1<\ldots<u_n<T}{\int\ldots\int} dX_{u_1}\otimes \ldots \otimes dX_{u_n}\in \left (\mathbb{R}^d \right )^{\otimes n},$$ provided the iterated integrals are well defined. Similarly, the truncated signature of order $n$ is defined as $S^n(X) := (1, X^1, X^2, \ldots, X^n) \in T^{n}(\mathbb{R}^d)$.
\end{definition}

\begin{remark}
The nature of the path $X$ will determine how the iterated integrals are defined. For example, if $X\in \mathcal{V}^p([0, T]; \mathbb{R}^d)$ with $1\leq p < 2$ the integrals can be defined in the sense of Young (\cite{lyonsbook}). If $X$ is a sample path of a Brownian motion, on the other hand, the integrals can be understood in the sense of It\^o or Stratonovich (\cite{frizvictoir}).
\end{remark}

\begin{example}
Let $X$ be a path for which the signature is well defined. The first term of the signature is given by

$$X^1 = \int_0^T dX_{u} = X_T - X_0 \in \mathbb{R}^d.$$

In other words, the first term of the signature is equal to increment of the path over the period $[0, T]$. The tensor $X^2$, on the other hand, is just the Levy area of the path: the signed area between the path and the chord that joins the initial point $X_0$ and the ending point $X_T$. Higher order terms of the signature capture other aspects of the path, but their geometric intuition becomes less clear.
\end{example}

Signatures are unique representations of paths up to tree-like equivalences (\cite{uniqueness, uniquenessrough}) that are parametrisation invariant (\cite[Lemma 2.12]{learningpast}). The reduction of the infinite dimensional group of reparametrisations makes signatures a concise way of representating paths, so that it makes sense to try to describe functions on paths by taking a look to their effects on path signatures. This is the idea we followed in Section \ref{sec:signature payoffs}, where we define signature payoffs as payoffs on price paths that act through their signatures in a linear way. As we will see, this class of payoffs is rich enough to approximate continuous payoffs to arbitrary accuracy.

Linear functionals on signatures will be key objects in this paper. Therefore, we will introduce some notation that will somewhat simplify their description. A word of length $n$ and alphabet $\{1, 2, \ldots, d\}$ is a sequence $I=(i_1, i_2, \ldots, i_n)\subset \{1, 2, \ldots, d\}^n$. The empty word, which is the only word of length zero, will be denoted by $()$. Given a word $I=(i_1, i_2, \ldots, i_n)$ and a $d$-dimensional path $X$ for which the signature is well defined, we define the projection $\pi^I\in T((\mathbb{R}^d)^\ast)$ as $\pi^I(S(X)) := \underset{0<u_1<\ldots<u_n<T}{\int\ldots\int} dX_{u_1}^{i_1}\ldots  dX_{u_n}^{i_n}$.

\begin{definition}
Let $(\Omega, \mathbb{P}, \mathcal{F})$ be a probability space, and let $X$ be a $d$-dimensional stochastic process. Assume that the signature of $X$ is well defined almost surely, and that $\mathbb{E}[S(X)]$ is finite. Then, $\mathbb{E}[S(X)]$ will be called the expected signature of the stochastic process $X$.
\end{definition}

Similarly to the case of the polynomial map discussed in the introduction, the expected signature of a stochastic process turns out to contain enough information about the law of the process to completely determine it (\cite{fawcett, ESlaw}).

Before defining signature payoffs, we will introduce an \textit{augmention} of a one-dimensional path.

\begin{definition}\label{def:augmented}
Let $X\in \mathcal{V}^p([0, T]; \mathbb{R})$ be a continuous path of bounded $p$-variation, with $p\geq 1$. The \textit{augmention} of $X$ is the path $\widehat{X}\in \mathcal{V}^p([0, T];  \mathbb{R}^{3})$ defined as $\widehat{X}_t :=\left (t, X_t, \frac{X_0}{T}t\right )$. The space of all augmented continuous paths with bounded $p$-variation will be denoted by $\mathcal{A}^p([0, T])$:

$$\mathcal{A}^p([0, T]) := \{\widehat{X}:X\in \mathcal{V}^p([0, T]; \mathbb{R})\}\subset \mathcal{V}^p([0, T]; \mathbb{R}^3).$$
\end{definition}

Therefore, the augmention of a path is a $3$-dimensional path. A \textit{payoff function} will then be defined as a real-valued mapping on augmented paths.

\begin{definition}
Let $p\geq 1$. A payoff function $\mathscr{P}$ is a mapping $\mathscr{P}:\mathcal{A}^p([0, T]) \rightarrow \mathbb{R}$.
\end{definition}

\begin{example}
\begin{enumerate}
\item European payoffs are payoffs of the form $\mathscr{P}(\widehat{X}) := \psi(X_T)$ for all $\widehat{X}\in \mathcal{A}^p([0, T])$, for some continuous $\psi:\mathbb{R}\rightarrow \mathbb{R}$.

\item The payoff of a lookback call option with floating strike is defined as $\mathscr{P}(\widehat{X}) := \max_{0\leq t \leq T} X_t - X_T$ for all $\widehat{X}\in \mathcal{A}^p([0, T])$.
\end{enumerate}

Notice that in these two examples, the payoff function is continuous. This is true for many other derivatives such as American options, Asian options, variance swaps, etc.
\end{example}

\section{Signature payoffs}\label{sec:signature payoffs}

We will now define \textit{signature payoffs}.

\begin{definition}
Let $\ell \in T((\mathbb{R}^3)^\ast)$. We define the $\ell$-signature payoff $\mathscr{S}^{\ell}$ of maturity $T>0$ as the payoff that, given a price path $\widehat{X}\in \mathcal{A}^p([0, T])$ for which the signature is well defined, pays to the holder of the derivative an amount of $\mathscr{S}^{\ell}(\widehat{X}) :=\ell(S(\widehat{X}))$ at maturity. In other words, $\mathscr{S}^{\ell}$ is the mapping defined as

\begin{align}
\mathscr{S}^{\ell}:\mathcal{A}^p([0, T])&\rightarrow \mathbb{R}\\
\widehat{X}&\mapsto \ell(S(\widehat{X})).
\end{align}
\end{definition}

\begin{example}

\begin{enumerate}
\item Let $K>0$, and set $\ell_K(S(\widehat{X})) := (-K \pi^{()} + \pi^{(2)}+ \pi^{(3)})(S(\widehat{X})) = -K + (X_T - X_0) + (X_0 - 0) = X_T - K$. Then, the $\ell_K$-signature payoff $\mathscr{S}^{\ell_K}$ pays to the holder of the derivative an amount of $X_T - K$ at time $T$. Therefore, $\mathscr{S}^{\ell_K}$ corresponds to a forward contract with delivery price $K$.

\item Let $K>0$. Define $\ell_K$ as $\ell_K(S(\widehat{X})) :=\left (-K\pi^{()} + \frac{1}{T}\pi^{(2,1)}+\pi^{(3)}\right )(S(\widehat{X})) = -K + \big ( \frac{1}{T}\int_0^T X_sds - X_0 \big ) + (X_0 - 0) = \frac{1}{T}\int_0^T X_sds - K$. The payoff $\mathscr{S}^{\ell_K}$ is then an Asian forward.

\end{enumerate}
\end{example}

As we see, some simple payoffs can actually be seen as signature payoffs, for some appropriate $\ell\in T((\mathbb{R}^3)^\ast)$. The natural question would then be the following: how rich is the family of signature payoffs? What other payoffs can they approximate? We will now show that signature payoffs approximate continuous payoffs arbitrarily well.

\begin{lemma}
The linear forms on $T((\mathbb{R}^d))$ induced by $T((\mathbb{R}^d)^\ast)$, when restricted to the range $S(\mathcal{V}^p([0, T]; \mathbb{R}^d))$ of the signature, form an algebra of real-valued functions.
\end{lemma}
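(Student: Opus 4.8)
The plan is to check the two defining requirements for a collection of functions to form an algebra: that it is a vector space and that it is closed under pointwise multiplication. The first is immediate, since $T((\mathbb{R}^d)^\ast)$ is itself a vector space and the assignment $\ell \mapsto \big(\widehat{X}\mapsto \ell(S(\widehat{X}))\big)$ is linear in $\ell$; moreover the empty word $\pi^{()}$ restricts to the constant function $1$, so the family already contains the constants. The entire content of the lemma therefore lies in establishing multiplicative closure, i.e.\ that the pointwise product of two induced linear forms is again induced by some element of $T((\mathbb{R}^d)^\ast)$.

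The key algebraic input I would invoke is the shuffle identity for iterated integrals. For two words $I=(i_1,\dots,i_n)$ and $J=(j_1,\dots,j_m)$ the product of the corresponding projections satisfies
$$\pi^I(S(X))\,\pi^J(S(X)) = \sum_{K} \pi^K(S(X)),$$
where the sum ranges over the $\binom{n+m}{n}$ words $K$ obtained by interleaving $I$ and $J$ while preserving the internal order of each factor (the shuffle product $I \, \sqcup\!\!\sqcup \, J$). This is a classical consequence of the Fubini/integration-by-parts structure of iterated integrals and holds for every path whose signature is defined; I would cite it or indicate the short induction on $n+m$. The essential observation is that the right-hand side is again of the form $\pi^L(S(X))$ for a suitable $L \in T((\mathbb{R}^d)^\ast)$, so the product of two coordinate projections is once more an induced linear form.

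From there the argument extends by bilinearity. Writing arbitrary $\ell,\ell' \in T((\mathbb{R}^d)^\ast)$ as combinations $\ell = \sum_I a_I\, \pi^I$ and $\ell' = \sum_J b_J\, \pi^J$, the product function is $\sum_{I,J} a_I b_J\, \pi^I(S(\cdot))\,\pi^J(S(\cdot))$, and the shuffle identity rewrites each term as a finite linear combination of projections. Collecting these exhibits the product as a single element of $T((\mathbb{R}^d)^\ast)$, so the product of the two induced linear forms is again induced by an element of $T((\mathbb{R}^d)^\ast)$, which is exactly multiplicative closure.

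The step I expect to require the most care is not the shuffle identity but the bookkeeping when $\ell$ and $\ell'$ are genuine formal series with infinitely many nonzero coefficients. Here I would point out that the grading rescues us: shuffling words of lengths $n$ and $m$ produces only words of length $n+m$, so the homogeneous component of degree $k$ of the product involves only the pairs $(I,J)$ with $|I|+|J|=k$. Since the alphabet $\{1,\dots,d\}$ is finite there are finitely many such words, hence finitely many such pairs, so each graded piece of the product is a finite sum and the product is a well-defined element of $T((\mathbb{R}^d)^\ast)$. With this finiteness noted, closure holds degree by degree and the lemma follows.
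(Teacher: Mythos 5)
Your proposal is correct and takes essentially the same route as the paper: the paper's entire proof is the citation ``See \cite[Theorem 2.15]{lyonsbook}'', and that theorem is precisely the shuffle-product identity $\pi^I(S(X))\,\pi^J(S(X)) = \sum_{K \in I \shuffle J} \pi^K(S(X))$ that you state and then extend by bilinearity and the grading argument (the grading discussion is in fact superfluous here, since under the paper's conventions $T((\mathbb{R}^d)^\ast)$ consists of \emph{finite} linear combinations of words, which is also what makes the induced linear forms well defined in the first place). The one claim to tighten is that the shuffle identity ``holds for every path whose signature is defined'': it holds when the iterated integrals obey the classical integration-by-parts and Fubini rules (bounded variation, Young, Stratonovich, and more generally geometric rough paths), but it fails for It\^o-type signatures, which is exactly why the paper imposes the geometricity assumption in Theorem \ref{th:approximation}.
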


\begin{proof}
See \cite[Theorem 2.15]{lyonsbook}.
\end{proof}

\begin{theorem}\label{th:approximation}
Let $p\geq 1$, and let $\mathscr{P}:\mathcal{A}^p([0, T])\rightarrow \mathbb{R}$ be a payoff function. Let $E\subset \mathcal{A}^p([0, T])$ be a compact set. Assume that for each $\widehat{X}\in E$, the signature $S(\widehat{X})$ is a $p$-geometric rough path (\cite[Definition 3.13]{lyonsbook}).  Then, assuming that $\mathscr{P}$ is continuous, given any $\varepsilon>0$ there exists a linear functional $\ell_\varepsilon\in T((\mathbb{R}^3)^\ast)$ such that

$$\left |\mathscr{P}(\widehat{X}) - \mathscr{S}^{\ell_\varepsilon}(\widehat{X})\right |<\varepsilon \quad \forall \widehat{X}\in E.$$
\end{theorem}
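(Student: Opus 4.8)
The plan is to recognize this as an application of the Stone--Weierstrass theorem, which characterizes uniformly dense subalgebras of $C(E)$ for compact $E$. The statement asserts that continuous payoffs on a compact set of signatures can be uniformly approximated by linear functionals on the signature; since the set of such linear functionals forms an algebra (by the preceding Lemma) of continuous real-valued functions on $E$, the natural strategy is to verify the two remaining hypotheses of Stone--Weierstrass: that the algebra contains the constants and that it separates points of $E$.

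First I would fix the compact set $E$ and consider the family $\mathcal{L} := \{\mathscr{S}^{\ell}|_E : \ell \in T((\mathbb{R}^3)^\ast)\}$ of restrictions of signature payoffs to $E$. Each $\mathscr{S}^{\ell}$ is continuous as a function of $\widehat{X}$, because the signature map $\widehat{X} \mapsto S(\widehat{X})$ is continuous on the space of $p$-geometric rough paths (this is where the rough-path hypothesis on $E$ is used, via the continuity of the Itô--Lyons / signature map in the $p$-variation topology) and $\ell$ is a bounded linear functional on the relevant truncated tensor algebra. Hence $\mathcal{L} \subset C(E)$. The preceding Lemma gives that $\mathcal{L}$ is an algebra: it is clearly closed under linear combinations, and the shuffle-product identity shows the pointwise product of two linear signature functionals is again a linear signature functional. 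The constant functions lie in $\mathcal{L}$ because $\pi^{()}(S(\widehat{X})) = 1$ for every path, so $c\,\pi^{()}$ realizes the constant $c$.

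The crux is point separation: I must show that if $\widehat{X}, \widehat{Y} \in E$ with $\widehat{X} \neq \widehat{Y}$, then some $\ell$ satisfies $\ell(S(\widehat{X})) \neq \ell(S(\widehat{Y}))$, equivalently that $S(\widehat{X}) \neq S(\widehat{Y})$ in $T((\mathbb{R}^3))$. This is where the augmentation introduced in Definition~\ref{def:augmented} does the essential work. In general the signature determines a path only up to tree-like equivalence and reparametrization, so two genuinely distinct paths could share a signature. The time coordinate $t$ in the first component of $\widehat{X}_t = (t, X_t, \tfrac{X_0}{T}t)$ removes the reparametrization ambiguity and forces the augmented path to be strictly monotone in its first coordinate, hence tree-reduced; by the uniqueness results cited earlier (\cite{uniqueness, uniquenessrough} and \cite[Lemma 2.12]{learningpast}) the signature of such a time-augmented path is injective on the path, so $\widehat{X} \neq \widehat{Y}$ implies $S(\widehat{X}) \neq S(\widehat{Y})$. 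I expect verifying this injectivity cleanly --- i.e. arguing that time-augmentation guarantees distinct signatures for distinct augmented paths --- to be the main obstacle, since it requires invoking the tree-like uniqueness theorem in the correct form rather than a routine estimate.

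With the three hypotheses in hand --- $\mathcal{L}$ is a subalgebra of $C(E)$, contains the constants, and separates points --- the Stone--Weierstrass theorem yields that $\mathcal{L}$ is dense in $C(E)$ in the uniform norm. Applying this to the given continuous payoff $\mathscr{P}|_E$, for any $\varepsilon > 0$ there is $\ell_\varepsilon \in T((\mathbb{R}^3)^\ast)$ with $\sup_{\widehat{X} \in E} |\mathscr{P}(\widehat{X}) - \mathscr{S}^{\ell_\varepsilon}(\widehat{X})| < \varepsilon$, which is exactly the claimed bound. The only point deserving care beyond the separation argument is confirming that $\mathscr{P}$, originally defined on all of $\mathcal{A}^p([0,T])$, restricts to an element of $C(E)$; this is immediate from the assumed continuity of $\mathscr{P}$ together with compactness of $E$.
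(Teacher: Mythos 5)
Your proposal is correct and takes essentially the same approach as the paper: Stone--Weierstrass applied to the algebra of linear signature functionals (via the preceding Lemma), with the time-augmentation/signature-uniqueness results of \cite{uniqueness, uniquenessrough} supplying the key non-trivial ingredient, and compactness of $E$ closing the argument. The only difference is cosmetic: you apply Stone--Weierstrass on the path space $E$ itself and use signature uniqueness to get point separation, whereas the paper transfers the problem to $S(E)$ (with the topology induced by the signature map) and uses the same uniqueness to make the induced payoff $\widehat{\mathscr{P}}$ well defined, point separation then being trivial for linear functionals on the tensor algebra --- a mirror image of your argument.
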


\begin{proof}
Let $\widehat{X}\in \mathcal{A}^p([0, T])$. Since at least one coordinate of $\widehat{X}$ is monotone, its signature $S(\widehat{X})$ determines $\widehat{X}$ uniquely (\cite{uniqueness, uniquenessrough}). Therefore, the mapping $\mathscr{P}:\mathcal{A}^p([0, T])\rightarrow \mathbb{R}$ induces a mapping $\widehat{\mathscr{P}}:S(\mathcal{A}^p([0, T]))\rightarrow \mathbb{R}$ such that $\mathscr{P}(\widehat{X}) = \widehat{\mathscr{P}}(S(\widehat{X}))$, where $S(\mathcal{A}^p([0, T]))$ is equipped with the topology induced by the signature mapping. Since $\mathscr{P}$ is continuous, the mapping $\widehat{\mathscr{P}}$ is also continuous.

Let $\varepsilon>0$. It is clear that linear forms on $T((\mathbb{R}^3))$ induced by $T((\mathbb{R}^3)^\ast)$ separate points and contain constant functions. Since by the previous lemma it is also an algebra, it follows from the Stone--Weierstrass Theorem that there exists an $\ell_\varepsilon\in T((\mathbb{R}^3)^\ast)$ such that $|\widehat{\mathscr{P}}(a) - \ell_\varepsilon(a)|<\varepsilon$ for all $a\in S(E)$. Thus,

$$ \left |\mathscr{P}(\widehat{X}) - \mathscr{S}^{\ell_\varepsilon}(\widehat{X})\right |<\varepsilon \quad\forall \widehat{X}\in E.$$
\end{proof}

\begin{remark}
The assumption that $S(\widehat{X})$ is a $p$-geometric rough path is used to ensure uniqueness of the signature, as per \cite{uniquenessrough}. If $X$ has bounded variation (i.e. $p=1$) then the signature, defined as Riemann--Stieltjes integrals, will always be a geometric rough path. If $X$ is a semimartingale, we need to define $S(\widehat{X})$ using Stratonovich integrals to make sure that the signature of $\widehat{X}$ is a geometric rough path.
\end{remark}

\begin{example}
It is important to have examples of payoffs that satisfy the conditions of Theorem \ref{th:approximation}. Most financial derivatives, both vanilla and exotic options, are included in the framework of Theorem \ref{th:approximation}. Some examples include European options, American options, Asian options, lookback options and variance swaps. Notice that barrier options, for example, are not included in this setting in principle because the discontinuity on the barrier implies that they do not satisfy the continuity hypothesis of Theorem \ref{th:approximation}. However, one could fix this by \textit{smoothening} barrier payoffs -- for example, with barrier bending.
\end{example}

A major application of Theorem \ref{th:approximation} is that as we have found a family of derivatives that approximates continuous payoffs, we could price a given derivative or a basket of derivatives by approximation using signature payoffs. As we will see in Section \ref{sec:pricing}, signature payoffs can be quickly priced. Therefore, pricing a derivative by approximation with signatures becomes especially interesting when pricing the derivative is computationally expensive. In Section \ref{sec:experiments} we will see that this approach to pricing arbitrary continuous payoffs is extremely accurate.

\section{Pricing signature payoffs}\label{sec:pricing}

As we have seen in Section \ref{sec:signature payoffs}, signatures can approximate to arbitrary accuracy continuous payoffs. Therefore, as mentioned at the end of that section, we could exploit this to price these derivatives: given a continuous payoff $\mathscr{P}$, one would find an $\ell$-signature payoff that approximates it to the desired accuracy. Then, the fair value of the derivative with payoff $\mathscr{P}$ could be approximated by the fair value of the $\ell$-signature. However, in order for this approach to pricing derivatives to be useful, we have to show that signature payoffs can be priced fast. In that case, one would be able to price fast derivatives that are computationally expensive to price.

Let $\mathscr{S}^\ell$ be a signature payoff. By the Fundamental Theorem of Asset Pricing (\cite{fundamental theorem}), under the assumption of no free lunch with vanishing risk (see \cite[Definition 2.8 (ii)]{fundamental theorem}) the fair value of $\mathscr{S}^\ell$ will be given by $Z_T \mathbb{E}^\mathbb{Q}[\mathscr{S}^\ell(\widehat{X})]$, with $Z_T$ the discount factor for the interval $[0, T]$ and $\mathbb{Q}$ the risk-neutral measure.

However, since the signature payoffs were defined to be linear functions on signatures, we have:

\begin{align*}
Z_T \mathbb{E}^\mathbb{Q}[\mathscr{S}^\ell(\widehat{X})] = Z_T \mathbb{E}^\mathbb{Q}[\ell(S(\widehat{X}))] = Z_T \ell \left ( \mathbb{E}^\mathbb{Q}[S(\widehat{X})] \right ).
\end{align*}

The expected signature of $\widehat{X}$ under the risk-neutral measure $\mathbb{Q}$ only depends on the risk-neutral measure (i.e. the market) which is independent of the signature payoff.

Suppose now that one has a basket of derivatives $\{\mathscr{P}_1, \ldots, \mathscr{P}_N\}$. These payoffs could be expensive to price, so that pricing the entire basket would take a considerable amount of time. Using Theorem \ref{th:approximation}, one could use the following procedure to price the basket:

\begin{enumerate}
\item Replace the basket $\{\mathscr{P}_1, \ldots, \mathscr{P}_N\}$ with a basket of signature payoffs $\{\mathscr{S}^{\ell_1}, \ldots, \mathscr{S}^{\ell_n}\}$, where the encodings $\ell_i$ were precoumputed.
\item Using current market conditions, compute the expected signature of the augmented price path under the risk-neutral measure $\mathbb{E}^\mathbb{Q}[S(\widehat{X})]$.
\item Evaluate each linear functional $\ell_i$ on the computed expected signature in order to get the fair values of $\{\mathscr{S}^{\ell_1}, \ldots, \mathscr{S}^{\ell_n}\}$, which are our approximations of the fair values of $\{\mathscr{P}_1, \ldots, \mathscr{P}_N\}$.
\end{enumerate}

In the above procedure, step 1 takes no time, as the encodings $\ell_i$ can be precomputed because they don't depend on market conditions. Step 3 isn't expensive either, since essentially the evaluation of each $\ell_i$ only consists on computing the inner product of two vectors. In general, step 2 is the only step that could take some time. However, since the expected signature is computed once and it is then used for the entire basket, the procedure can be significantly faster than pricing each derivative individually.

Moreover, in many situations the expected signature $\mathbb{E}^\mathbb{Q}[S(\widehat{X})]$ can actually be computed explicitly if we assume some model on the price paths, as we will see in Section \ref{sec:bs}, making its computation inexpensive. This would imply that one could very quickly price single derivatives or even entire baskets of derivatives, following the three steps above.

\subsection{Pricing signature payoffs under the Black--Scholes model}\label{sec:bs}

As we have seen, the problem of pricing any $\ell$-signature payoff is reduced to the problem of computing $\mathbb{E}^\mathbb{Q}[S(\widehat{X})]$, under the risk-free measure. In this section we will study how to compute the expected signature when we assume that the underlying stock price follows the Black--Scholes model.

Under the Black--Scholes model the $\mathbb{Q}$-dynamics of the price path is given by

\begin{equation}\label{eq:dynamics}
dX_t = rX_tdt + \sigma X_tdW_t,
\end{equation} with $r$ and $\sigma$ the interest rate and volatility respectively, which are assumed to be constant, and $W$ a standard Brownian motion. The augmented path $\widehat{X}_t = \left (t, X_t, \frac{t}{T}X_0\right )$ will then be a diffusion process satisfying the SDE

$$d\widehat{X} = \mu(\widehat{X}_t) dt + V(\widehat{X}_t)\cdot dW_t$$ where $W$ is a standard Brownian motion and $\mu:\mathbb{R}^3\rightarrow \mathbb{R}^3$, $V:\mathbb{R}^3\rightarrow \mathbb{R}^3$ are given by

$$\mu(x_1, x_2, x_3) := \left (1, \frac{X_0}{T}, rx_3\right ),\quad V(x_1, x_2, x_3) = (0, 0, \sigma x_3)$$ and the initial condition is $\widehat{X}_0 = (0, 0, X_0)$.

In this section, we will define $\Phi(t, x) := \mathbb{E}^\mathbb{Q}[S(\widehat{X}_{[0, t]}) | X_0=x]$ the expected signature under the risk-neutral measure of the augmented path $\widehat{X}$ over the interval $[0, t]$, conditioned on the initial value, where the signature is understood in the sense of Stratonovich. Our goal is to find $\Phi(T, X_0)$, with $X_0$ the spot price and $T>0$ the time to maturity. Then, the fair value of an $\ell$-signature payoff would be $Z_T \ell(\Phi(T, X_0))$.

It turns out that the function $\Phi$ satisfies a parabolic partial differential equation, as shown in \cite[Theorem 4.7]{hao}.

\begin{theorem}\label{th:hao}
Let $X$ be a $d$-dimensional It\^o diffusion process

$$dX_t = \mu(X_t)dt + V(X_t)\cdot dW_t$$ with $W$ a $k$-dimensional Brownian motion and

$$\mu(x) = (\mu^{(i)}(x))_{1\leq i \leq d},\quad V(x)=(V_i^{(j)}(x))_{1\leq i \leq k, 1\leq j \leq d}.$$

Assume that $\mu$ and $V$ satisfy the globally Lipschitz condition. Let $A$ be the infinitesimal generator of the process $X$, and let $\Phi_n(t, x)$ denote the $n$\textsuperscript{th} term of $\Phi(t, x)=\mathbb{E}^x[S(X_{[0, t]})] \in T((\mathbb{R}^d))$. Then, $\Phi_n:[0, T]\times \mathbb{R}^d\rightarrow  (\mathbb{R}^d)^{\otimes n}$ satisfies the following parabolic PDE, for every $n\geq 2$:

\begin{align*}
\left (-\partial_t + A\right)\Phi_n(t, x) &= - \left (\sum_{j=1}^d \mu^{(j)}(x)e_j\right )\otimes \Phi_{n-1}(t, x) \\
&-\sum_{j=1}^d \left ( \sum_{j_1=1}^dV^{(j_1)}(x)^T V^{(j)}(x)e_{j_1}\right )\otimes \partial _{x_j}\Phi_{n-1}(t, x)\\
&-\left (\dfrac{1}{2}\sum_{j_1, j_2=1}^d V^{(j_1)}(x)^T V^{(j_2)}(x)(x)e_{j_1}\otimes e_{j_2}\right )\otimes \Phi_{n-2}(t, x),
\end{align*} and $\Phi_1(t, x)$ satisfies

\begin{align*}
\left (-\partial_t + A\right)\Phi_1(t, x) &= - \left (\sum_{j=1}^d \mu^{(j)}(x)e_j\right ).
\end{align*}

Moreover, $\Phi_n$ satisfies the initial condition $\Phi_n(0, \cdot) = 0$ for $n\geq 1$, and $\Phi_0 \equiv 1$.

\end{theorem}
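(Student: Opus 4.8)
The plan is to derive the system by induction on $n$, combining the multiplicativity of the signature (Chen's identity) with the Markov property of $X$ and a short-time asymptotic expansion of the low-order signature levels. The starting point is $S(X_{[0,t]}) = S(X_{[0,h]})\otimes S(X_{[h,t]})$ for $0<h<t$, which at the $n$-th tensor level reads $X^n_{[0,t]} = \sum_{k=0}^n X^k_{[0,h]}\otimes X^{n-k}_{[h,t]}$. Conditioning on $\mathcal{F}_h$ and invoking the Markov property, the future factor obeys $\mathbb{E}^x[X^{n-k}_{[h,t]}\mid\mathcal{F}_h] = \Phi_{n-k}(t-h, X_h)$, so that $\Phi_n(t,x) = \sum_{k=0}^n \mathbb{E}^x\big[X^k_{[0,h]}\otimes\Phi_{n-k}(t-h,X_h)\big]$. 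I would then let $h\downarrow 0$ and retain every contribution of order $h$.

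The second step is the short-time expansion of the three surviving levels $k=0,1,2$ (the levels $k\geq 3$ being $o(h)$ thanks to the moment bounds $\mathbb{E}^x|X_h-x|^{2m}=O(h^m)$). For $k=0$, Taylor expansion in time together with Dynkin's formula gives $\mathbb{E}^x[\Phi_n(t-h,X_h)] = \Phi_n(t,x) - h\,\partial_t\Phi_n(t,x) + h\,A\Phi_n(t,x) + o(h)$, which supplies the operator $-\partial_t+A$. For $k=1$, writing $X^1_{[0,h]}=X_h-x$ and Taylor-expanding $\Phi_{n-1}(t-h,X_h)$ to first order in space, the first moment $\mathbb{E}^x[X_h-x]=\mu(x)h+o(h)$ produces the term $\big(\sum_j\mu^{(j)}e_j\big)\otimes\Phi_{n-1}$, while the second moment $\mathbb{E}^x[(X_h-x)^{(j_1)}(X_h-x)^{(j)}]=\big(V^{(j_1)T}V^{(j)}\big)h+o(h)$ produces $\sum_j\big(\sum_{j_1}V^{(j_1)T}V^{(j)}e_{j_1}\big)\otimes\partial_{x_j}\Phi_{n-1}$. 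For $k=2$ the Stratonovich convention is essential: by the shuffle relation $X^1_{[0,h]}\otimes X^1_{[0,h]} = X^2_{[0,h]} + (X^2_{[0,h]})^{T}$ the symmetric part of $X^2$ equals $\tfrac12\,X^1\otimes X^1$, and the antisymmetric Lévy-area part has zero mean, so $\mathbb{E}^x[X^2_{[0,h]}] = \tfrac12\big(\sum_{j_1,j_2}V^{(j_1)T}V^{(j_2)}e_{j_1}\otimes e_{j_2}\big)h+o(h)$, which yields the $\Phi_{n-2}$ term.

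Collecting the order-$h$ contributions, cancelling the leading $\Phi_n(t,x)$, dividing by $h$ and sending $h\to 0$ then gives exactly the stated PDE for $n\geq 2$. The case $n=1$ follows the same way from $\Phi_1(t,x)=\mathbb{E}^x[X_t-x]$ using $\Phi_0\equiv 1$ and $\Phi_{-1}=0$, and the initial conditions $\Phi_n(0,\cdot)=0$ for $n\geq1$ are immediate since $X^n_{[0,0]}=0$.

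The hard part will be the rigour behind the expansion rather than its algebra. To apply Dynkin's formula and to extract a classical PDE I must know that each $\Phi_n$ is $C^{1,2}$ in $(t,x)$ and control the $o(h)$ remainders uniformly; the globally Lipschitz hypothesis secures well-posedness of the SDE and the moment estimates needed to discard the levels $k\geq 3$, but the required differentiability in $x$ is not automatic. I would obtain it inductively: granting smoothness of the coefficients (or invoking standard parabolic regularity), $\Phi_n$ solves a linear parabolic equation whose inhomogeneity is assembled from the already-regular $\Phi_{n-1}$ and $\Phi_{n-2}$, so regularity propagates down the hierarchy from the base case $\Phi_1$. Carrying the Stratonovich-to-It\^o correction consistently at every level, and justifying the interchange of the expectation with the $h\to0$ limit, is the bookkeeping that must be done with care.
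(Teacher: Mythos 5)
The paper does not actually prove this statement: Theorem \ref{th:hao} is imported from the cited thesis (\cite[Theorem 4.7]{hao}), so there is no in-paper argument to compare yours against, and your proposal has to be judged on its own merits. On those merits, your strategy is the natural (and, I believe, essentially the standard) route to this hierarchy: Chen's identity to peel off the \emph{initial} interval $[0,h]$ --- the correct end to peel, since it is what puts the explicit short-time factors on the left of the tensor product (matching the left-tensoring by $\mu$ and $V^{T}V$ in the statement) and lets the Markov property close the system in terms of the $\Phi_m$ alone; peeling the final interval instead would leave joint expectations like $\mathbb{E}^x[X^{n-1}_{[0,t]}\otimes\mu(X_t)]$ that do not factorize. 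Your bookkeeping is also correct: $k=0$ produces $-\partial_t+A$ via Dynkin, $k=1$ produces both the $\mu\otimes\Phi_{n-1}$ term and the second-moment term against $\partial_{x_j}\Phi_{n-1}$, and $k=2$ produces the $\tfrac12\,V^{(j_1)T}V^{(j_2)}\otimes\Phi_{n-2}$ term, with your appeal to the shuffle identity $X^1\otimes X^1=X^2+(X^2)^{T}$ being exactly the point where the Stratonovich (geometric) convention enters --- that identity fails for It\^o iterated integrals, and recognizing this is the key structural insight. The $n=1$ case and the initial conditions are as routine as you claim.

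Two caveats. First, a small imprecision: for a diffusion with drift, the L\'evy area over $[0,h]$ does not have exactly zero mean; its mean is $O(h^2)$ (drift--drift contribution), which is all you need, but it should be stated that way. Second, the analytic gaps you flag are genuine and constitute most of the real work in a rigorous proof: (i) a priori $C^{1,2}$ regularity and polynomial-growth control of each $\Phi_n$, without which Dynkin's formula and the classical reading of the PDE are unavailable --- and you are right that the globally Lipschitz hypothesis alone does not deliver this, which is arguably a defect of the statement as transcribed in the paper as well; (ii) moment bounds of BDG type for the iterated Stratonovich integrals, needed both to discard the levels $k\geq 3$ and to control the Taylor remainders uniformly so that expectation and the limit $h\to 0$ can be interchanged. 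Your plan to propagate regularity inductively down the hierarchy (with smoother coefficients or parabolic regularity) is the right repair, but as written these steps remain a sketch; the algebraic skeleton, however, is exactly right.
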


Using the theorem above, we may now state and prove a theorem that gives an explicit recurrent relation for the expected signature of the Black--Scholes model.

\begin{theorem}\label{th:ES}
Let $X$ follow the dynamics given by \eqref{eq:dynamics}, with constant interest rate $r$ and volatility $\sigma$. Consider its augmented path $\widehat{X}\in \mathcal{A}^p([0, T])$, and define the expected signature up to time $t$ and starting value $x$, $\Phi(t, x) := \mathbb{E}[S(\widehat{X}_{[0, t]})|X_0=x]$.

Define $E_n(t) := \exp\left (\left (nr + n(n-1)\frac{\sigma^2}{2}\right) t \right )$ for $0\leq t \leq T$, and given a word $I$ with alphabet $\{1, 2, 3\}$ let $\alpha(I)$ denote the number of appearences of the letters 2 or 3 in $I$.

Then, there exists a function $F:[0, T]\rightarrow T((\mathbb{R}^3))$ such that for every word $I$ of length at least 1, we have

\begin{equation}\label{eq:projection}
\pi^I(\Phi(t, X_0)) = X_0^{\alpha(I)} \pi^I(F(t)).
\end{equation}

Moreover, the first order term of $F$ is given by $F_1(t) = te_1 + X_0(\exp(rt) - 1)e_2 + X_0 t e_3 / T$, and for $n\geq 2$
the n\textsuperscript{th} term of $F$ is characterised by the following:

\begin{enumerate}
\item $\pi^{(1, i_2, \ldots, i_n)}(F(t)) = \int_0^t E_{\alpha((1, i_2, \ldots, i_n))}(s) \pi^{(i_2, \ldots, i_n)}(F(t - s))ds$.
\item $\pi^{(2, 1, i_3, \ldots, i_n)}(F(t)) = (r + \sigma^2\alpha((1, i_3, \ldots, i_n))) \int_0^t E_{\alpha((2, 1, i_3, \ldots, i_n))}(s) \pi^{(1, i_3, \ldots, i_n)}(F(t-s))ds$.
\item $\pi^{(2, 2, i_3, \ldots, i_n)}(F(t)) = (r + \sigma^2\alpha((2, i_3, \ldots, i_n)))\int_0^t E_{\alpha((2, 2, i_3, \ldots, i_n))}\pi^{(2, i_3, \ldots, i_n)}(F(t-s))ds$\newline  $+ \frac{1}{2}\sigma^2 \int_0^t E_{\alpha((2, 2, i_3, \ldots, i_n))}(s) \pi^{(i_3, \ldots, i_n)}(F(t-s))ds$.
\item $\pi^{(3, i_2, \ldots, i_n)}(F(t)) = \frac{1}{T}\pi^{(1, i_2, \ldots, i_n)}(F(t))$.
\end{enumerate}

\end{theorem}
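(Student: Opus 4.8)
The plan is to apply Theorem~\ref{th:hao} to the augmented price path $\widehat{X}_t=(t,X_t,\tfrac{X_0}{T}t)$ and to solve the resulting hierarchy of parabolic equations term by term. First I would record the coefficients of the augmented diffusion in the ordering of the statement (coordinate $1$ is time, coordinate $2$ is the stock, coordinate $3$ is the drift coordinate $\tfrac{X_0}{T}t$): from \eqref{eq:dynamics} these are $\mu(x_1,x_2,x_3)=(1,rx_2,\tfrac{X_0}{T})$ and $V(x_1,x_2,x_3)=(0,\sigma x_2,0)$, so the generator is $A=\partial_{x_1}+rx_2\partial_{x_2}+\tfrac{X_0}{T}\partial_{x_3}+\tfrac12\sigma^2x_2^2\partial_{x_2}^2$. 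These coefficients are affine in the state, hence globally Lipschitz, so the hypotheses of Theorem~\ref{th:hao} are met; note that $X_0$ enters only as a fixed parameter in the drift of the third coordinate, so the process is genuinely time-homogeneous once $X_0$ is frozen. Because the signature depends only on increments, $\Phi$ is independent of the absolute coordinates $x_1$ and $x_3$, so each $\Phi_n$ is a function of $t$ and of the single stochastic coordinate $x_2$ alone (with $X_0$ a parameter), and $A\Phi_n=rx_2\partial_{x_2}\Phi_n+\tfrac12\sigma^2x_2^2\partial_{x_2}^2\Phi_n$.

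Next I would establish the scaling relation~\eqref{eq:projection}. The cleanest route is direct: under \eqref{eq:dynamics} one has $X_s=X_0\,Y_s$ with $Y_s=\exp((r-\tfrac12\sigma^2)s+\sigma W_s)$ independent of $X_0$, while the increments of the three coordinates are $ds$, $X_0\,dY_s$ and $\tfrac{X_0}{T}\,ds$. Hence in the iterated integral every letter $2$ and every letter $3$ factors out one power of $X_0$, whereas letter $1$ contributes none; taking expectations gives $\pi^I(\Phi(t,X_0))=X_0^{\alpha(I)}\pi^I(F(t))$ and defines $F$. Writing $n_2(I),n_3(I)$ for the numbers of letters $2$ and $3$, the same computation started from a general stock value $x_2$ (keeping the drift slope $\tfrac{X_0}{T}$) yields the refined form $\pi^I(\Phi)=x_2^{\,n_2(I)}X_0^{\,n_3(I)}\pi^I(F(t))$, and it is this monomial-in-$x_2$ structure that decouples the PDE below. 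Evaluating the first-order equation of Theorem~\ref{th:hao}, $(-\partial_t+A)\Phi_1=-(e_1+rx_2e_2+\tfrac{X_0}{T}e_3)$ with $\Phi_1(0,\cdot)=0$, componentwise then reproduces exactly $F_1(t)=te_1+X_0(e^{rt}-1)e_2+X_0te_3/T$.

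For the recurrences I would induct on word length using the $n\ge 2$ equation. Projecting onto a word $I=(i_1,\dots,i_n)$ and inserting the affine $\mu,V$, only a few source terms survive according to the first two letters: a leading $1$ leaves $-\pi^{(i_2,\dots,i_n)}(\Phi_{n-1})$; a leading $2$ produces $-rx_2\pi^{(i_2,\dots)}(\Phi_{n-1})-\sigma^2x_2^2\pi^{(i_2,\dots)}(\partial_{x_2}\Phi_{n-1})$, plus $-\tfrac12\sigma^2x_2^2\pi^{(i_3,\dots)}(\Phi_{n-2})$ when the second letter is again $2$; and a leading $3$ leaves $-\tfrac{X_0}{T}\pi^{(i_2,\dots)}(\Phi_{n-1})$. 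By the inductive monomial structure each right-hand side is a monomial $x_2^{k}X_0^{g}$ times a function of $t$, and the key computation is that $A$ acts diagonally, $A\,x_2^{k}=\big(rk+\tfrac12\sigma^2k(k-1)\big)x_2^{k}$, with eigenvalue equal to $\tfrac{d}{dt}\log E_{k}$. Thus each projected equation collapses to a scalar linear ODE in $t$ whose integrating factor is precisely $E_{k}$, and solving by the variation-of-constants formula turns the differential relation into the convolutions $\int_0^t E_{k}(s)\pi^{(\cdots)}(F(t-s))\,ds$ of items~1--3, while the proportionality of the third-coordinate increments to the first gives item~4 at once. The same picture arises probabilistically by peeling the first letter and using the Markov property together with $\mathbb{E}[X_u^{k}\mid X_0]=X_0^{k}E_{k}(u)$, which makes transparent why $E_k$ is indexed by the power of the stock.

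The main obstacle I anticipate is bookkeeping, but it is genuinely delicate, because the letters play two distinct roles that are easy to conflate. The overall power of $X_0$ that factors out of $\pi^I(\Phi)$ is $\alpha(I)=n_2(I)+n_3(I)$, counting letters $2$ \emph{and} $3$, which is what \eqref{eq:projection} records; by contrast the exponential growth rate governing a given equation is $E_{k}$ with $k=n_2$ the power of the \emph{stochastic} coordinate alone, and the prefactors $(r+\sigma^2(\cdot))$ are likewise indexed by $n_2$, since the deterministic drift coordinate contributes to the $X_0$-scaling but not to any exponential growth. These two counts coincide only for words free of the letter $3$ (the quickest check being $\pi^{(1,3)}(\Phi)=\tfrac{X_0 t^2}{2T}$, forcing the growth index $0$ rather than $\alpha=1$), so identifying the correct index in each $E$ and each prefactor is where the care must go. The remaining ingredients---the Stratonovich reading of the signature, already packaged into the second-order term of Theorem~\ref{th:hao}, and tracking which of $\Phi_{n-1},\partial_{x_2}\Phi_{n-1},\Phi_{n-2}$ feed each of the four cases---are routine once the monomial ansatz is in place.
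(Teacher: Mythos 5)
Your proposal follows the same skeleton as the paper's own proof: apply Theorem \ref{th:hao} to the augmented diffusion, represent each $\Phi_n$ by Feynman--Kac (equivalently, variation of constants for the projected scalar ODEs, since $A$ acts diagonally on monomials in the stock variable), and induct on the length of the word. Your pathwise derivation of the scaling \eqref{eq:projection} -- factoring $X_s = X_0 Y_s$ so that each letter $2$ or $3$ releases one power of $X_0$ -- is a cleaner route to that relation than the paper's induction, but it is the same fact. You also use the coordinate ordering $(t, X_t, X_0 t/T)$ consistently, whereas the paper's displayed $\mu$ and $V$ swap the second and third coordinates relative to the PDE its proof actually uses; that is only a typo on the paper's side.

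The substantive difference is your final paragraph, and there you are right and the paper is wrong. The functions $\Phi_n(t,x)$ to which Theorem \ref{th:hao} and Feynman--Kac apply are expected signatures of the diffusion with the \emph{fixed} generator $A = \partial_{x_1} + r x_2\partial_{x_2} + \tfrac{X_0}{T}\partial_{x_3} + \tfrac{1}{2}\sigma^2 x_2^2\partial_{x_2}^2$ restarted from stock value $x$: the third coordinate keeps slope $X_0/T$; it does not become $x/T$. Hence, writing $n_2(I)$, $n_3(I)$ for the number of letters $2$, $3$ in $I$ as you do, the correct ansatz is $\pi^I(\Phi_n(t,x)) = x^{n_2(I)} X_0^{n_3(I)}\pi^I(F(t))$, and the paper's inductive hypothesis $\pi^I(\Phi_n(t,x)) = x^{\alpha(I)}\pi^I(F(t))$, which its proof invokes at general $x$ inside the Feynman--Kac integral, fails whenever $I$ contains the letter $3$. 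Consequently the indices of $E$ and the prefactors in items 1--3 must be $n_2$, not $\alpha$. Your counterexample is decisive: direct integration gives $\pi^{(1,3)}(\Phi(t,X_0)) = X_0 t^2/(2T)$, so \eqref{eq:projection} forces $\pi^{(1,3)}(F(t)) = t^2/(2T)$, whereas item 1 as printed yields $\int_0^t e^{rs}\tfrac{t-s}{T}\,ds = \tfrac{e^{rt}-1-rt}{r^2 T} \neq \tfrac{t^2}{2T}$ for $r\neq 0$; with the index $n_2((3)) = 0$ one recovers $t^2/(2T)$ as required. So your argument establishes a corrected version of the theorem: the recurrences printed in the statement (and the paper's proof of them) are incorrect for every word whose tail contains the letter $3$, and coincide with yours exactly on words free of the letter $3$.
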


\begin{proof}

Assume $X$ follows the dynamics given in \eqref{eq:dynamics}. Then, $X$ is of the form

$$X_t = X_0 \exp\left (\left (r - \dfrac{\sigma^2}{2}\right )t + \sigma W_t\right ).$$

Therefore, $X_t$ is log-normally distributed, and the moments of $X_t$ are precisely given by $\mathbb{E}[X_t^n] = X_0^n E_n(t)$.

The process $\widehat{X}$ is a $3$-dimensional It\^o diffusion process with drift $\mu(x_1, x_2, x_3) = (1, X_0 / T, rx_3)$ and volatility $V(x_1, x_2, x_3) = (0, 0, \sigma x_3)$. Moreover, notice that the expected signature of $\widehat{X}$ with initial value $(x_1, x_2, x_3)$ is independent of $x_1$ and $x_2$. Therefore, by Theorem \ref{th:ES}, $\Phi(t, x)$ satisfies the following PDE for $n\geq 2$, for $(t, x) \in [0, T]\times \mathbb{R}$:

\begin{align*}
(-\partial_t + A) \Phi_n =-\left ((e_1 + rxe_2 + X_0e_3/T)\otimes \Phi_{n-1} + \sigma^2 x^2 e_2 \otimes \partial_{x} \Phi_{n-1}+\dfrac{1}{2}\sigma^2x^2 e_2\otimes e_2\otimes \Phi_{n-2}\right ),
\end{align*} and $\Phi_1$ satisfies

$$(-\partial_t + A)\Phi_1 = -(e_1 + rxe_2 + X_0e_3/T),$$ where $A$ is the infinitesimal generator of $\widehat{X}$. Moreover, $\Phi_0\equiv 1$ and $\Phi_n(0, \cdot) = 0$.

Set $f_n(t, x) = (e_1 + rxe_2 + X_0 e_3/T)\otimes \Phi_{n-1}(t, x) + \sigma^2 x^2 e_2 \otimes \partial_{x} \Phi_{n-1}(t, x)+\dfrac{1}{2}\sigma^2x^2 e_2\otimes e_2\otimes \Phi_{n-2}(t, x)$ for $n\geq 2$, and $f_1(t, x) = e_1 + rxe_2 + X_0e_3/T$. By Feynman-Kac formula, $\Phi_n$ will then be given by

$$\Phi_n(t, X_0) = \int_0^t \mathbb{E}[f_n(t - s, X_s)]ds.$$

For $n=1$, we have:

\begin{align*}
\Phi_1(t, X_0) &= \int_0^t \mathbb{E}[f_1(t-s, X_s)]ds = \int_0^t (e_1 + rE_1(s) e_2 + X_0 e_3/T) ds \\
&= te_1 + X_0(\exp(rt) - 1)e_2 + X_0 t e_3 / T.
\end{align*}

Therefore, for $I=(1)$, $I=(2)$ or $I=(3)$, we have $\pi^I(\Phi(t, X_0)) = X_0^{\alpha(I)} \pi^I(F(t))$ with $\pi^{(1)}(F(t)) := t$, $\pi^{(2)}(F(t)) := \exp(rt) - 1$ and $\pi^{(3)}(F(t)) := t/T$.

Assume now that $\pi^I(\Phi(t, X_0)) = X_0^{\alpha(I)} \pi^I(F(t))$ holds for all words $I$ of length strictly less than $n$. We will show that it also holds for words of length $n$, with $n\geq 2$. We will distinguish four steps.

\begin{itemize}
\item \underline{Case 1: $I=(1, i_2, \ldots, i_n)$.} In this case $\pi^{(1, i_2, \ldots, i_n)}(f_n(t, x)) = \pi^{(i_2, \ldots, i_n)}(\Phi_{n-1}(t, x))=\\x^{\alpha((i_2, \ldots, i_n))} \pi^{(i_2, \ldots, i_n)}(F(t))$, so that $\pi^{(1, i_2, \ldots, i_n)}(\Phi_n(t, X_0)) = X_0^{\alpha((1, i_2, \ldots, i_n))} \pi^{(1, i_2, \ldots, i_n)}(F(t))$ with $\pi^{(1, i_2, \ldots, i_n)}(F(t)) := \int_0^t E_{\alpha((1, i_2, \ldots, i_n))}(s)\pi^{(i_2, \ldots, i_n)}(F(t-s))ds$.

\item \underline{Case 2: $I=(2, 1, i_3, \ldots, i_n)$.} We have

\begin{align*}
\pi^{(2, 1, i_3, \ldots, i_n)}(f_n(t, x)) &= rx\pi^{(1, i_3, \ldots, i_n)}(\Phi_{n-1}(t, x)) + \sigma^2 x^2 \partial_x \pi^{(1, i_3, \ldots, i_n)}(\Phi_{n-1}(t, x)) \\
&=rx^{\alpha((2, 1, i_3, \ldots, i_n))} \pi^{(1, i_3, \ldots, i_n)}(F(t))\\
& +\sigma^2 \alpha((1, i_3, \ldots, i_n)) x^{\alpha((2, 1, i_3, \ldots, i_n))} \pi^{(1, i_3, \ldots, i_n)}(F(t))\\
&=x^{\alpha((2, 1, i_3, \ldots, i_n))} (r + \sigma^2\alpha((1, i_3, \ldots, i_n)))\pi^{(1, i_3, \ldots, i_n)}(F(t)).
\end{align*}

Hence, $\pi^{(2, 1, i_3, \ldots, i_n)}(\Phi_n(t, X_0)) = X_0^{\alpha((2, 1, i_3, \ldots, i_n))} \pi^{(2, 1, i_3, \ldots, i_n)}(F(t))$ with $$\pi^{(2, 1, i_3, \ldots, i_n)}(F(t)) := (r + \sigma^2 \alpha((1,i_3,\ldots,i_n)))\int_0^t E_{\alpha((2, 1, i_3, \ldots, i_n))}(s) \pi^{(1, i_3, \ldots, i_n)}(F(t-s))ds.$$

\item \underline{Case 3: $I=(2, 2, i_3, \ldots, i_n)$.} We have

\begin{align*}
\pi^{(2, 2, i_3, \ldots, i_n)}(f_n(t, x)) &= rx\pi^{(2, i_3, \ldots, i_n)}(\Phi_{n-1}(t, x)) + \sigma^2x^2 \partial_x \pi^{(2, i_3, \ldots, i_n)}(\Phi_{n-1}(t, x)) +\\
&\dfrac{1}{2}\sigma^2x^2\pi^{(i_3, \ldots, i_n)}(\Phi_{n-2}(t, x))\\
&=x^{\alpha((2, 2, i_3, \ldots, i_n))}(r + \sigma^2\alpha((2, i_3, \ldots, i_n))) \pi^{(2, i_3, \ldots, i_n)}(F(t))\\
&+ \dfrac{1}{2}\sigma^2 x^{\alpha((2, 2, i_3, \ldots, i_n))}\pi^{(i_3, \ldots, i_n)}(F(t))\\
&=x^{\alpha((2, 2, i_3, \ldots, i_n))} \bigg ((r + \sigma^2 \alpha((2, i_3, \ldots, i_n)) )\pi^{(2, i_3, \ldots, i_n)}(F(t))\\
& + \dfrac{1}{2}\sigma^2 \pi^{(i_3, \ldots, i_n)}(F(t))\bigg ).
\end{align*}

Thus, $\pi^{(2, 2, i_3, \ldots, i_n)}(\Phi_n(t, X_0)) = X_0^{\alpha((2, 2, i_3, \ldots, i_n))} \pi^{(2, 2, i_3, \ldots, i_n)}(F(t))$ with+ 

\begin{align*}
 \pi^{(2, 2, i_3, \ldots, i_n)}(F(t)) &:= \bigg (\mu + \sigma^2 \alpha((2, i_3, \ldots, i_n)) \int_0^t E_{\alpha((2, 2, i_3, \ldots, i_n))}(s) \pi^{(2, i_3, \ldots, i_n)}(F(t-s))ds \\
&\dfrac{1}{2}\sigma^2 \int_0^t E_{\alpha((2, 2, i_3, \ldots, i_n))}(s) \pi^{(i_3, \ldots, i_n)}(F(t-s))ds\bigg ).
\end{align*}

\item \underline{Case 4: $I=(3, i_2, \ldots, i_n)$.} From the definition of $\pi^{(3, i_2, \ldots, i_n)}(S(\widehat{X}))$, we have:

\begin{align*}
&\pi^{(3, i_2, \ldots, i_n)}(S(\widehat{X})) = \underset{0<u_1<\ldots<u_n<T}{\int\ldots\int} dX_{u_1}^{3} dX_{u_2}^{i_2} \ldots dX_{u_n}^{i_n} \\
&=\dfrac{X_0}{T}\underset{0<u_1<\ldots<u_n<T}{\int\ldots\int} dX_{u_1}^1 dX_{u_2}^{i_2} \ldots dX_{u_n}^{i_n} = \dfrac{X_0}{T} \pi^{(1, i_2, \ldots, i_n)}(S(\widehat{X})).
\end{align*}

Hence,

\begin{align*}
&\pi^{(3, i_2, \ldots, i_n)}(\Phi(t, X_0)) = \dfrac{X_0}{T}\pi^{(1, i_2, \ldots, i_n)}(\Phi(t, X_0)) = \dfrac{X_0^{1+\alpha((1, i_2, \ldots, i_n))}}{T} \pi^{(1, i_2, \ldots, i_n)}(F(t))\\
&=  \dfrac{X_0^{\alpha((3, i_2, \ldots, i_n))}}{T} \pi^{(1, i_2, \ldots, i_n)}(F(t))
\end{align*} so that $$\pi^{(3, i_2, \ldots, i_n)}(F(t)) = \dfrac{1}{T} \pi^{(1, i_2, \ldots, i_n)}(F(t)).$$

\end{itemize}

\end{proof}

Theorem \ref{th:ES} provides a way to explicitly find the expected signature. Therefore, we have now found a way of explicitly pricing under the Black--Scholes model any signature payoff. Since the terms in $F$ can be iteratively computed up to a given order in a fast way, we can price signature payoffs inexpensively:

\begin{corollary}
Let $\ell\in T((\mathbb{R}^3)^\ast)$, and let $\mathscr{S}^\ell$ be an $\ell$-signature payoff of maturity $T>0$. Then, under the Black--Scholes model with constant interest rate $r$ and volatility $\sigma$, the fair value of the signature payoff is given by $\exp(-rT) \ell(\Phi(T, X_0))$, where $X_0>0$ is the spot price and $\Phi$ is given by \eqref{eq:projection}.
\end{corollary}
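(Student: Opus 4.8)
The plan is to assemble the corollary from pieces already in place: the Fundamental Theorem of Asset Pricing, the linearity of $\ell$, the explicit form of the Black--Scholes discount factor, and the identification of the expected signature with $\Phi(T, X_0)$ via Theorem \ref{th:ES}. No new machinery is required; the content is in correctly chaining these facts and justifying the one interchange of a linear functional with an expectation.

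First I would invoke the Fundamental Theorem of Asset Pricing (\cite{fundamental theorem}): under no free lunch with vanishing risk, the time-$0$ fair value of the payoff $\mathscr{S}^\ell$ equals $Z_T\,\mathbb{E}^\mathbb{Q}[\mathscr{S}^\ell(\widehat{X})]$, where $\mathbb{Q}$ is the risk-neutral measure and $Z_T$ the discount factor over $[0,T]$. Since the Black--Scholes model assumes a constant interest rate $r$, the discount factor is simply $Z_T = \exp(-rT)$, so the fair value becomes $\exp(-rT)\,\mathbb{E}^\mathbb{Q}[\ell(S(\widehat{X}))]$.

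Next I would exploit linearity to write $\mathbb{E}^\mathbb{Q}[\ell(S(\widehat{X}))] = \ell(\mathbb{E}^\mathbb{Q}[S(\widehat{X})])$. The interchange of $\ell$ with the expectation is legitimate here because any $\ell\in T((\mathbb{R}^3)^\ast)$ is supported on finitely many word projections $\pi^I$, so the functional is a finite linear combination and the step reduces to linearity of expectation on a finite sum; each scalar term $\mathbb{E}^\mathbb{Q}[\pi^I(S(\widehat{X}))]$ is finite under the log-normal dynamics, since the moments $\mathbb{E}[X_t^n]=X_0^n E_n(t)$ are all finite. This is the only genuinely delicate point, and it is clean precisely because no dominated-convergence argument across infinitely many tensor levels is needed.

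Finally I would identify $\mathbb{E}^\mathbb{Q}[S(\widehat{X})]$ with $\Phi(T, X_0)$, the conditional expected signature defined in Section \ref{sec:bs}, whose terms are furnished explicitly by the recurrence \eqref{eq:projection} of Theorem \ref{th:ES}. Substituting then yields the claimed value $\exp(-rT)\,\ell(\Phi(T, X_0))$, completing the proof.
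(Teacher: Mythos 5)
Your proposal is correct and takes essentially the same route as the paper, which states the corollary without a separate proof precisely because it is the chaining of the Section \ref{sec:pricing} argument (Fundamental Theorem of Asset Pricing plus the interchange $Z_T\,\mathbb{E}^\mathbb{Q}[\ell(S(\widehat{X}))] = Z_T\,\ell(\mathbb{E}^\mathbb{Q}[S(\widehat{X})])$) with the specialisation $Z_T = \exp(-rT)$ and the identification of $\mathbb{E}^\mathbb{Q}[S(\widehat{X})]$ with $\Phi(T, X_0)$ from Theorem \ref{th:ES}. Your added justification of the interchange --- that any $\ell \in T((\mathbb{R}^3)^\ast)$ involves only finitely many word projections, each with finite expectation under the log-normal dynamics --- is a point the paper glosses over, but it refines rather than alters the argument.
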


\begin{example}
Consider the case of forward contracts, which as we have seen are also signature payoffs with $\ell_K :=  -K\pi^{()} + \pi^{(2)}+ \pi^{(3)}$. By the previous corollary, the fair value of forward contracts with delivery price $K$ will be given by

$$\exp(-rT) \ell_K(\Phi(T, X_0)) = \exp(-rT) (-K+X_0(\exp(rT) -1) + X_0) = X_0 - K\exp(-rT)$$ which corresponds to the well-known fair value of forward contracts.

\end{example}

In this section we studied what the fair value of a signature payoff is, under the Black--Scholes model, and we found an explict closed formula. A natural question would be whether a similar procedure can be followed for more complex models. Assuming that the price path follows some diffusion process, such as a local volatility model, one should be able to apply Theorem \ref{th:hao} again in order to get a similar explicit expresion for the fair value.

\section{Numerical experiments}\label{sec:experiments}

We implemented the proposed approach of pricing using signature payoffs to compute the fair price of different derivatives. We considered European call options with moneyness $99\%$, American put options with moneyness $99\%$, Asian options with moneyness $102\%$, lookback options and variance swaps with strike $20\%$, all with 1 year maturity. All of these derivatives satisfy the conditions of Theorem \ref{th:approximation}.

\begin{figure}
  \centering
  \subcaptionbox{European call option with moneyness $99\%$.}[.45\linewidth][c]{%
    \includegraphics[width=.45\linewidth]{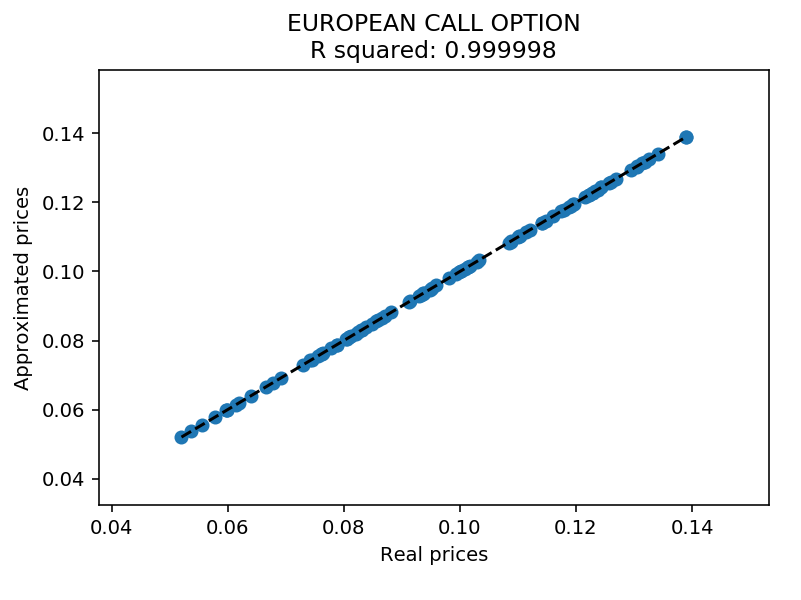}}\quad
  \subcaptionbox{American put option with moneyness $99\%$.}[.45\linewidth][c]{%
    \includegraphics[width=.45\linewidth]{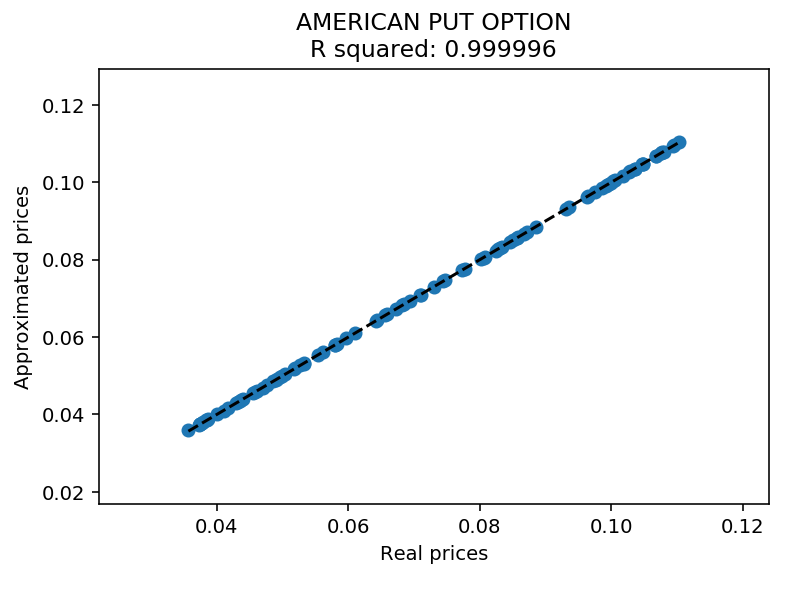}}\quad

  \bigskip

  \subcaptionbox{Asian call option with moneyness 102\%.}[.45\linewidth][c]{%
    \includegraphics[width=.45\linewidth]{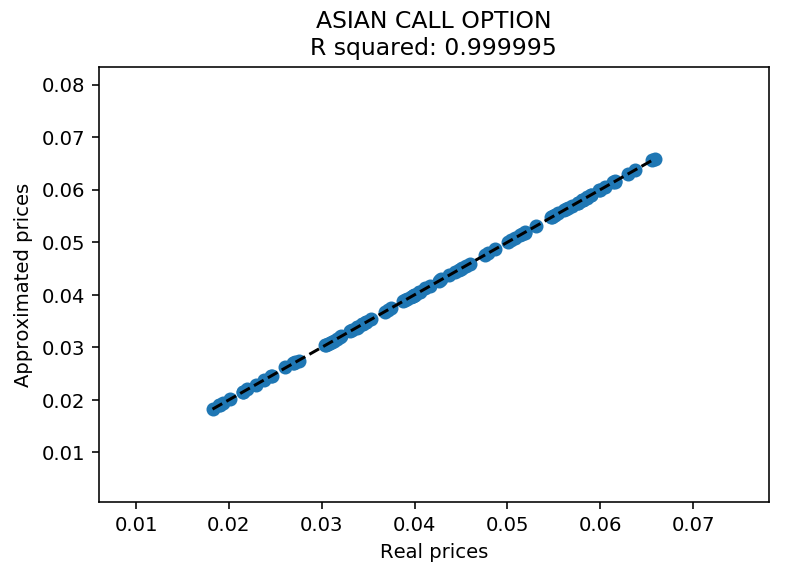}}
  \subcaptionbox{Lookback call option with floating strike.}[.45\linewidth][c]{%
    \includegraphics[width=.45\linewidth]{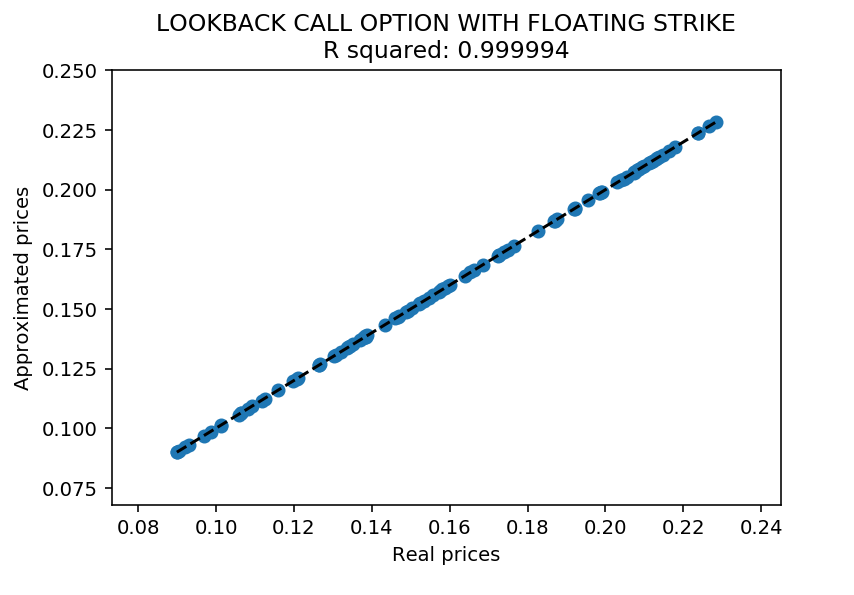}}\quad

  \bigskip

  \subcaptionbox{Variance swap with strike $20\%$.}[.5\linewidth][c]{%
    \includegraphics[width=.45\linewidth]{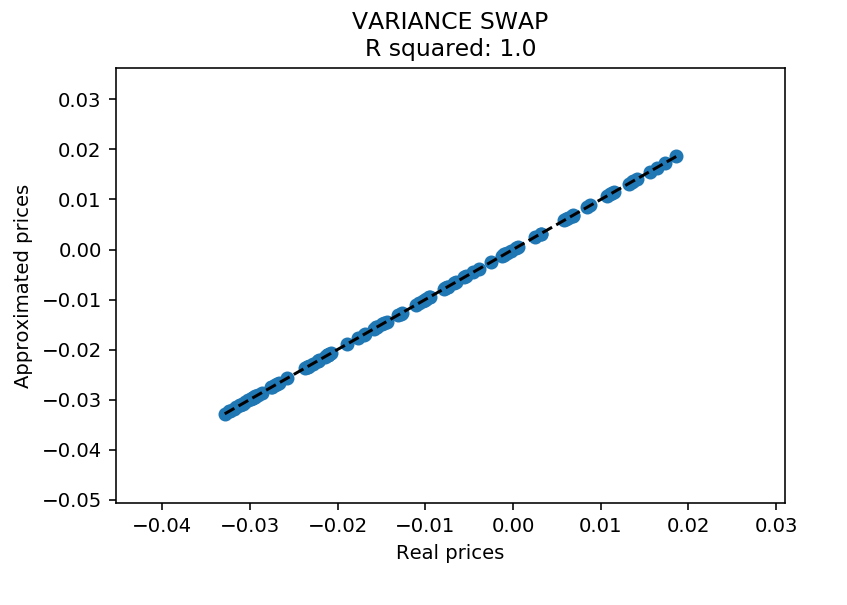}}\quad

  \caption{Numerical experiments on approximating prices of derivatives using signatures. $R^2$ was used to measure performance. In all cases, the $R^2$ was higher than 0.99999.}
\label{fig:experiment}
\end{figure}

For obvious computational reasons, we had to truncate signatures so that instead of considering linear functionals on the full signature, we considered linear functionals on the truncated signature of order $n\geq 1$. For this experiments, we fixed $n=4$, which for a 3-dimensional augmented path produces a linear functional of dimension $1 + 3 + 3^2 + 3^3 + 3^4 = 121$.

The linear functional was estimated applying linear regression against the truncated signature using a dataset of simulated market conditions following the Black--Scholes model \eqref{eq:dynamics}. A dataset of $100$ different market conditions was used for this task -- a tiny dataset for machine learning standards. We then priced the derivatives using the approximated signature payoff in an out-of-sample set of 100 market conditions, and we then compared the price we obtained with the corresponding real prices. As we see in Figure \ref{fig:experiment}, the accuracy is remarkable: we obtained an $R^2$ higher than 0.99999 in all the derivatives we considered.

\section{Conclusion}

In this paper we introduce signature payoffs, a family of derivatives that pay to the holder an amount dependent on the signature of the price process. This dependence on the signature is given by a linear functional, which makes pricing them relatively easy -- the task of computing the fair value of signature payoffs is reduced to the task of computing an expected signature. As we have seen, this expected signature is easy to compute if the price process is assumed to follow a specific model. In Section \ref{sec:bs} we studied the particular case of a Black--Scholes model, but the computations shown in that section can easily be extended to other models. Moreover, one could easily extend the framework to price signature payoffs for dividend-paying underlyings by suitably modifying the definition of the augmention of a path (Definition \ref{def:augmented}) to incorporate information about the dividends. A similar approach can be followed to price multi-asset signature payoffs.

The power of signature payoffs comes from the capability of signatures to approximate continuous functions on paths. In Theorem \ref{th:approximation} we show that signature payoffs can approximate arbitrary continuous functions to arbitrary accuracy. This makes pricing entire baskets of derivatives quick: using an accurate representation as a signature payoff of each derivative in the basket, which can be precomputed, we can use Section \ref{sec:pricing} to approximate the fair value of each derivative in the basket by the corresponding fair values of the signature payoffs. As we saw in Section \ref{sec:experiments}, this approximation turns out to be remarkably accurate: we achieved an $R^2$ higher than $0.99999$ in all the payoffs we considered.

\section{Disclaimer}

Opinions and estimates constitute our judgement as of the date of this Material, are for informational purposes only and are subject to change without notice. This Material is not the product of J.P. Morgans Research Department and therefore, has not been prepared in accordance with legal requirements to promote the independence of research, including but not limited to, the prohibition on the dealing ahead of the dissemination of investment research. This Material is not intended as research, a recommendation, advice, offer or solicitation for the purchase or sale of any financial product or service, or to be used in any way for evaluating the merits of participating in any transaction. It is not a research report and is not intended as such. Past performance is not indicative of future results. Please consult your own advisors regarding legal, tax, accounting or any other aspects including suitability implications for your particular circumstances. J.P. Morgan disclaims any responsibility or liability whatsoever for the quality, accuracy or completeness of the information herein, and for any reliance on, or use of this material in any way.

Important disclosures at: \texttt{www.jpmorgan.com/disclosures}.

\end{document}